\documentclass[copyright,creativecommons]{eptcs}
\providecommand{\event}{ACT 2026} 

\usepackage[utf8]{inputenc} 
\usepackage[T1]{fontenc}    

\usepackage[x11names,dvipsnames, svgnames,table]{xcolor}

\DeclareSymbolFont{cmlargesymbols}{OMX}{cmex}{m}{n}
\let\coprod\undefined
\DeclareMathSymbol{\coprod}{\mathop}{cmlargesymbols}{"60}

\newenvironment{colblock}[1]{%
  \color{#1}%
}{%
}

\newcommand{\highlightcol}{blue}

\usepackage{hyperref}
\hypersetup{
  colorlinks   = true,    
  urlcolor     = \highlightcol,    
  linkcolor    = \highlightcol,    
  citecolor    = \highlightcol     
}

\usepackage{amsmath}
\usepackage{amsthm}
\usepackage{thmtools}
\usepackage{hyperref}
\usepackage[capitalize]{cleveref}

\usepackage{url}            
\usepackage{booktabs}       
\usepackage{amsfonts}       
\usepackage{nicefrac}       
\usepackage{fancyhdr}       
\usepackage{graphicx}       

\usepackage{mathrsfs}
\usepackage{mathtools}
\usepackage{stmaryrd}
\usepackage{tikz}
\usepackage{quiver}
\usepackage{thmtools}

\usepackage{iftex}

\usepackage{verbatim}

\usepackage{microtype}
\usepackage{silence}
\ifpdf
  \usepackage{underscore}         
  \usepackage[T1]{fontenc}        
\else
  \usepackage{breakurl}           
\fi

\newtheorem{theorem}{Theorem}

\newtheorem{proposition}[theorem]{Proposition}
\newtheorem{lemma}[theorem]{Lemma}

\theoremstyle{definition}
\newtheorem{definition}[theorem]{Definition}
\newtheorem{example}[theorem]{Example}
\newtheorem{remark}[theorem]{Remark}

\newtheorem{construction}[theorem]{Construction}

\newcommand{\id}{\mathsf{id}}

\newcommand{\supp}{\operatorname{supp}}

\DeclarePairedDelimiterX{\infdivx}[2]{(}{)}{%
  #1\;\delimsize\|\;#2%
}

\newcommand{\namedCat}[1]{\textup{\textbf{#1}}}

\newcommand{\namedFunctor}[1]{\mathrm{#1}}

\newcommand{\CatC}{\namedCat{C}}

\newcommand{\Set}{\namedCat{Set}}

\newcommand{\FuncF}{\namedFunctor{F}}

\newcommand{\FuncT}{\namedFunctor{T}}
\newcommand{\unit}{\eta}
\newcommand{\mult}{\mu}

\newcommand{\dist}{d}                       
\newcommand{\Dist}{\namedFunctor{\MakeUppercase{\dist}}}   
\newcommand{\power}{p}
\newcommand{\Power}{\mathcal{\MakeUppercase{\power}}}

\newcommand{\PowerFinite}{\Power_{\operatorname{fin}}}  

\newcommand{\transitionCoalgFinal}{\omega}

\newcommand{\States}{S}
\newcommand{\states}{\MakeLowercase{\States}}

\newcommand{\Observations}{O}
\newcommand{\observations}{\MakeLowercase{\Observations}}

\newcommand{\inputs}{i}
\newcommand{\Inputs}{I}

\newcommand{\transitionCoalg}{f}

\newcommand{\transitionCoalgTr}{\mathsf{tr}}
\newcommand{\transitionCoalgOut}{\mathsf{out}}

\newcommand{\moore}{\mathsf{Moore}}

\newcommand{\structmoore}{{\FuncF \FuncT \moore}}

\newcommand{\structstochmoore}{{\FuncF \Dist \moore}}

\newcommand{\mealy}{\mathsf{Mealy}}

\newcommand{\stochmealy}{\mathsf{\Dist \mealy}}

\newcommand{\structmealy}{{\FuncT \mealy}}

\newcommand{\suppmealy}{{\FuncT \mathsf{Supp} \mealy}}

\newcommand{\stochsuppmealy}{{\Dist \mathsf{Supp} \mealy}}

\newcommand{\structunifmealy}{{\mathsf{Unif} \structmealy}}

\newcommand{\stochunifmealy}{{\mathsf{Unif} \stochmealy}}

\DeclareMathOperator{\Caus}{Caus}

\title{Bayesian updates from coalgebraic determinisation}
\author{Manuel Baltieri
\institute{Araya Inc., Japan\thanks{This work was funded by the Advanced Research + Invention Agency (ARIA) through project code MSAI-SE01-P011.}
\xdef\sharedthanks{\the\value{footnote}}}
\email{manuel\_baltieri@araya.org}
\and
Nathaniel Virgo
\institute{University of Hertfordshire, UK\footnotemark[\sharedthanks]}
\email{n.virgo@herts.ac.uk}
}
\def\titlerunning{Bayesian updates from determinisation}
\def\authorrunning{M. Baltieri \& N. Virgo}
\begin{document}
\frenchspacing
\maketitle

\begin{abstract}
    The powerset construction is the classical determinisation procedure for nondeterministic finite automata. 
    In the coalgebraic setting, this construction has been generalised to structured coalgebras, which are coalgebras equipped with extra data. 
    For stochastic Moore machines over the distribution monad, a type of structured coalgebra, the determinisation construction induces a semantics assigning to each finite input word a distribution on the current output.
    This semantics is appropriate when only the current output matters, but it is too coarse for settings in which intermediate observations must also be taken into account, as is typical for agents solving POMDPs in control theory and reinforcement learning. 
    In these contexts, agents need to condition on all realised observations, not just the final one, so to better plan for the future.
    This has been addressed from a category theoretic perspective through a procedure called \emph{unifilarisation}, which (in our context) takes a stochastic Mealy machine and produces a machine whose states are priors over the original state space and whose transitions are given by Bayesian filtering.
    Here we show that unifilarisation is an instance of coalgebraic determinisation. 
    We work with Mealy machines over monads equipped with extra structure generalising the notion of the support of a distribution. 
    We show that in this setting, unifilarisation arises from the general determinisation procedure. 
    We then compare the resulting final coalgebra semantics with the Moore-style one. 
    Instead of assigning only a distribution on current outputs to each finite input word, it yields causal stochastic behaviours, that is, families mapping input words to distributions on output words compatible with the ``causality'' constraint that outputs cannot depend on future inputs.
\end{abstract}

\section{Introduction}
Determinisation is classically introduced in automata theory as the move from nondeterministic finite automata to deterministic finite automata via the powerset (or subset) construction~\cite{rabinFiniteAutomataTheir1959}: given a nondeterministic finite automata with state space $\States$ and input space $\Inputs$, the determinised machine has state space $\PowerFinite(\States)$, with a transition function mapping a subset of states $\States' \in \PowerFinite(\States)$ and an input symbol $\inputs \in \Inputs$ to the set of all $\inputs$-successors of states in $\States'$, i.e. all the states that can be reached from $\States'$ after consuming an input symbol $\inputs$.
This construction yields an equivalent deterministic automaton that accepts the same language of finite words.

This construction has previously been generalised to a coalgebraic setup in~\cite{silvaGeneralizingPowersetConstruction2010, silvaGeneralizingDeterminizationAutomata2013, jacobsTraceSemanticsDeterminization2015, bonchiDistributionBisimilarityPower2021}.
Specifically, the construction in~\cite{silvaGeneralizingPowersetConstruction2010, silvaGeneralizingDeterminizationAutomata2013} starts with a \emph{structured coalgebra}, which is a coalgebra $(\States, f : \States \to \FuncF \FuncT(\States))$ of $\FuncF\FuncT$, where $\FuncT$ is a monad, together with some extra data.
The monad $\FuncT$ is used to model nondeterminism or stochasticity, with the precise type of nondeterminism (or ``branching'') determined by the monad.
The \emph{determinisation} procedure then turns a structured coalgebra into an algebra $\left( \FuncT(\States), \transitionCoalg^\sharp : \FuncT(\States) \to \FuncF \FuncT(\States) \right)$, which is a coalgebra of $\FuncF$ (and can therefore be seen as deterministic), whose carrier is $\FuncT(\States)$.
The traditional powerset construction is a special case of this framework, where the transition type is that of a Moore machine of a specific kind and the branching is given by the finite powerset monad $\FuncT = \PowerFinite$.
For this monad, determinisation groups states (by union) that can be reached by consuming the same input.
Its extension to (Rabin) probabilistic automata follows a similar pattern, but with the distribution monad with finite support $\FuncT = \Dist$.
In this case, determinisation averages transitions by a given distribution of initial states, i.e. it performs a pushforward of probabilities along the transitions of the given coalgebra.
One consequence of this construction is that, when final coalgebras exist, determinisation induces a behavioural semantics on determinised coalgebras. 
For the examples above, this semantics is indexed by finite input words. 
In particular, when $\FuncT = \Dist$, the determinised coalgebra lives on $\Dist(\States)$, and for each $n \ge 0$ assigns to every input word $w \in \Inputs^n$ a distribution in $\Dist(\Observations)$, obtained by propagating an initial distribution on states through the transitions induced by $w$. 
Thus the semantics records the distribution of the current output after a finite input history has been processed.

This semantics is natural when only the current output matters. 
However, there are settings in which one also needs, for each possible realised output, the corresponding update of the state distribution, so that prediction or control can be implemented consistently along an observed trajectory. 
In the probabilistic case, a different approach again replaces a system with state space $\States$ by a new one with state space $\Dist(\States)$, now interpreted as a space of \emph{beliefs} updated by Bayesian filtering~\cite{kalmanNewApproachLinear1960, jazwinskiStochasticProcessesFiltering1970}.
This is strikingly similar in spirit to determinisation of automata, though not obviously the same.
In control and reinforcement learning this yields the \emph{belief MDP} where posterior distributions over hidden states form a fully observable state space for prediction and decision making, and their transition is given by the Bayes filter~\cite{astromOptimalControlMarkov1965, striebelSufficientStatisticsOptimum1965, kaelblingPlanningActingPartially1998, subramanianApproximateInformationState2022}. 
Concretely, given a prior $p \in \Dist(\States)$ and an input $\inputs \in \Inputs$, one forms a joint distribution on $\Observations \times \States$ of next output and next state, then takes its marginal on $\Observations$, and for each possible output $\observations$ obtains a posterior distribution on next states conditioned on observing $\observations$.
The same construction appears in computational mechanics as the \emph{mixed-state presentation}~\cite{ellisonPredictionRetrodictionAmount2009, crutchfieldExactComplexitySpectral2016}, see~\cite[Section 5.1]{rosasAIVatFundamental2025}. 
There, finite observation histories induce distributions over the (hidden) states of a stochastic generator, called mixed states. 
This treatment highlights an important aspect of this process, \emph{unifilarity}: once the current mixed state and the realised observation are fixed, the next mixed state is uniquely determined. 
This has been formulated in category-theoretic terms as \emph{unifilarisation}, with a functor from categories of stochastic machines to categories of unifilar machines~\cite{virgoUnifilarMachinesAdjoint2023}.

In this paper, we show that unifilarisation is itself an instance of coalgebraic determinisation, but not of the standard Moore-style form. 
For structured Moore machines over the distribution monad $\Dist$, determinisation aggregates successor behaviour by averaging and yields final semantics of the form $\Inputs^\star \to \Dist(\Observations)$, recording only the distribution of the current output after a finite input history has been processed. 
By contrast, we work with Mealy machines and what we call \emph{monads with support}, that is, monads $\FuncT$ equipped with a function $\supp: \FuncT(X) \to \PowerFinite^+(X)$ for each $X$ and isomorphisms of the form
$
    \FuncT(X \times Y) \cong \coprod_{p \in \FuncT(X)} \FuncT(Y)^{\supp(p)}
$
for each pair of sets $X$ and $Y$, natural in $X$.
From these we define \emph{supported Mealy machines}, which are coalgebras of the functor
$
    \FuncF(\States) = (\coprod_{p \in \FuncT(\Observations)} \States^{\supp(p)})^\Inputs.
$
Supported Mealy machines then fit \cite{silvaGeneralizingPowersetConstruction2010, silvaGeneralizingDeterminizationAutomata2013}'s structured-coalgebra pattern as coalgebras
$
    f : \States \to \FuncF(\FuncT \States),
$
and we construct a compatible $\FuncT$-algebra
$
    \xi : \FuncT(\FuncF(\FuncT \States)) \to \FuncF(\FuncT \States).
$
The general determinisation procedure therefore sends a machine with state space $\States$ to one with state space $\FuncT(\States)$. 
For $\FuncT = \Dist$, the resulting determinised coalgebra is exactly the unifilarisation of a stochastic Mealy machine: given a prior and an input, it forms the prior-weighted joint law of next output and next state, then rewrites it as an output marginal together with posterior next-state distributions. 
These posterior distributions are precisely the same of belief MDPs and mixed-state presentations. 
This also induces a different final coalgebra semantics from the Moore case: instead of assigning only a distribution on current outputs to each finite input word, it yields causal stochastic behaviours, represented by families $b_n : \Inputs^n \to \Dist(\Observations^n)$ compatible with Bayesian filtering updates.

Belief MDPs have previously been approached from a coalgebraic perspective by~\cite{koriCoalgebraicDeterminizationBelief2026}, introducing a construction combining the existing coalgebraic determinisation with a new ``belief decomposition'' so to obtain objects called ``belief coalgebras'' that abstract belief MDPs.
Our work differs in that rather than introducing a new construction to accommodate `beliefs', we show that they arise naturally from the existing determinisation construction, once the context is set up correctly.

\section{Structured coalgebras}
This section provides an overview of the generalised (coalgebraic) determinisation construction approach introduced by~\cite{silvaGeneralizingPowersetConstruction2010, silvaGeneralizingDeterminizationAutomata2013}, focusing on a few core examples relevant for a comparison of final coalgebra semantics introduced for different kinds of determinations that can be shown to be special cases of this general construction.
Following~\cite{silvaGeneralizingPowersetConstruction2010, silvaGeneralizingDeterminizationAutomata2013}, we introduce structured coalgebras, or $\FuncF \FuncT$-coalgebras, coalgebras parametrised by a functor $\FuncF$ and a monad $\FuncT$, and carrying a $\FuncT$-algebra structure.
\begin{definition}[Structured coalgebras, or $\FuncF \FuncT$-coalgebras~\cite{silvaGeneralizingPowersetConstruction2010, silvaGeneralizingDeterminizationAutomata2013}]
    \label{def:structuredCoalgebras}
    Let $\CatC$ be a category, $\FuncF: \CatC \to \CatC$ an endofunctor and $(\FuncT, \unit, \mult)$ a monad on $\CatC$.
    An $\FuncF \FuncT$-coalgebra is a coalgebra $(\States, \transitionCoalg: \States \to \FuncF \FuncT(\States))$ of the endofunctor $\FuncF \FuncT$, together with a morphism $\xi:\FuncT\FuncF \FuncT(\States)\to \FuncF \FuncT(\States)$ such that $(\FuncF \FuncT(\States), \xi)$ is a $\FuncT$-algebra over the monad $(\FuncT, \unit, \mult)$.
\end{definition}
Explicitly, this means a structured coalgebra consists of:
\begin{itemize}
    \item an object $\States$ in $\CatC$,
    \item a morphism $f:\States \to \FuncF \FuncT(\States)$, to be seen as a coalgebra of $\FuncF \FuncT$,
    \item a morphism $\xi : \FuncT \FuncF \FuncT(\States) \to \FuncF \FuncT(\States)$,
\end{itemize}
such that the following diagrams, representing unit and action properties, commute:
\begin{equation}
    \begin{tikzcd}
        \FuncF \FuncT(\States) && {\FuncT \FuncF \FuncT(\States)} &&& {\FuncT \FuncT \FuncF \FuncT(\States)} && {\FuncT \FuncF \FuncT(\States)} \\
        \\
        && \FuncF \FuncT(\States) &&& {\FuncT \FuncF \FuncT(\States)} && \FuncF \FuncT(\States)
        \arrow["{\unit_{\FuncF \FuncT(\States)}}", from=1-1, to=1-3]
        \arrow["{\id_{\FuncF \FuncT(\States)}}"', from=1-1, to=3-3]
        \arrow["\xi", from=1-3, to=3-3]
        \arrow["{\FuncT \xi}", from=1-6, to=1-8]
        \arrow["\mult_{\FuncF \FuncT(\States)}"', from=1-6, to=3-6]
        \arrow["\xi", from=1-8, to=3-8]
        \arrow["\xi"', from=3-6, to=3-8]
    \end{tikzcd}
\end{equation}
The idea is that the monad $\FuncT$ models some kind of nondeterminism, and a coalgebra of $\FuncF \FuncT$ models a process with a nondeterministic component.
Typical choices for $\FuncT$ are $\FuncT = \PowerFinite$ for finite possibilities, $\FuncT = \Dist$ for discrete probabilities, or $\FuncT = ({-}) + 1$ for partiality.
An $\FuncF \FuncT$-coalgebra can seen as a system where the next-step behaviour is ``$\FuncF$-shaped'', and the successor states come with branching or nondeterminism of type $\FuncT$.
The coalgebraic determinisation procedure of~\cite{silvaGeneralizingPowersetConstruction2010, silvaGeneralizingDeterminizationAutomata2013} takes a structured coalgebra $(\States, \transitionCoalg, \xi)$ and produces a map $\transitionCoalg^\sharp: \FuncT(\States) \to \FuncF \FuncT(\States)$, to be seen as a coalgebra of $\FuncF$ with carrier $\FuncT(\States)$.
This is a deterministic system in the sense that it is a coalgebra of $\FuncF$ only, without the nondeterminism.

\begin{construction}[Determinisation of $\FuncF \FuncT$-coalgebras~\cite{silvaGeneralizingPowersetConstruction2010, silvaGeneralizingDeterminizationAutomata2013}]
    \label{cst:determinisation}
    A structured coalgebra $(\States, \transitionCoalg, \xi)$
    can be extended uniquely to a homomorphism of $\FuncT$-algebras $\transitionCoalg^\sharp : (\FuncT \States, \mu_\States) \to (\FuncF (\FuncT \States), \xi)$.
    This unique extension is given by
    \begin{align}
        \FuncT(\States) \xrightarrow{\transitionCoalg^\sharp} \FuncF \FuncT(\States)= \FuncT(\States) \xrightarrow{T f} \FuncT \FuncF \FuncT(\States) \xrightarrow{\xi} \FuncF \FuncT(\States).
    \end{align}
    The resulting map $\transitionCoalg^\sharp$ is a coalgebra of $\FuncF$ with carrier $\FuncT(\States)$ and is called the \emph{determinisation} of the coalgebra $\transitionCoalg$, or the \emph{determinised} coalgebra.
\end{construction}
Intuitively, $\xi$ tells us how to combine a $\FuncT$-collection of one-step behaviours into a single one-step behaviour.
Using this, the original coalgebra extends uniquely to $\transitionCoalg^\sharp \coloneqq \xi \circ \FuncT \transitionCoalg$, which gives the transitions of the determinised coalgebra.
In~\cite{silvaGeneralizingPowersetConstruction2010, silvaGeneralizingDeterminizationAutomata2013} it is shown that various constructions, including the classical powerset construction and the totalisation of partial automata, are instances of this.
We will later show that unifilarisation, or the construction of a ``belief machine'' is also an instance, for an appropriate choice of $\FuncF$, $\FuncT$ and $\xi$.

Here we are following~\cite{silvaGeneralizingPowersetConstruction2010, silvaGeneralizingDeterminizationAutomata2013} a single map $\xi$ turning a single structured coalgebra into a $\FuncT$-algebra, rather than a natural family of such maps, turning \emph{all} $\FuncF \FuncT$-coalgebras into $\FuncT$-algebras. 
The latter is discussed in more detail in~\cite{jacobsTraceSemanticsDeterminization2015} by using an (Eilenberg-Moore) distributive law. 
However, we won't explore this in the current paper.

\subsection{Example - Structured Moore machines}
\label{sec:structured-moore-machines-example}
Various examples of this parametrisation are given in~\cite{silvaGeneralizingPowersetConstruction2010, silvaGeneralizingDeterminizationAutomata2013}, including partial Mealy automata, nondeterministic automata, pushdown automata, etc. 
Here we are especially interested in one of these examples, structured Moore machines, which are structured coalgebras for the functor $\FuncF({-}) = \FuncT(\Observations) \times ({-})^\Inputs$ and an arbitrary monad $\FuncT$.
We are interested in these machines because they model stochastic systems with inputs an outputs, which are similar to the POMDPs of interest in control theory and reinforcement learning.
However, as we will see, their determinisation doesn't closely resemble the construction of a belief MDP.
We will see in \cref{sec:detMealyMachines} that the belief machine does arise when we make different choices of $\FuncF$, $\FuncT$ and $\xi$.
By comparing the two cases we will get some insight into why we should expect Bayesian updating to arise in one case but not the other.

To define structured Moore machines we need the following proposition, which is part of the standard theory of monads on $\Set$.
It relies on the fact that every monad on a Cartesian closed category has a canonical strength.

\begin{proposition}
    \label{prop:algebra-on-exponental}
    Given a monad $(\FuncT, \unit, \mult)$ on a Cartesian closed category, an algebra $(X, h)$ for $\FuncT$ and an arbitrary set $\Inputs$, the set $X^\Inputs$ also carries an algebra structure $h^{(\Inputs)} : \FuncT(X^\Inputs) \to X^\Inputs$ given by currying the map
    \begin{align}
        \label{eqn:currying-algebra}
        \FuncT(X^\Inputs) \times \Inputs
            \xrightarrow{\mathrm{str}_{X^\Inputs, \Inputs}}
        \FuncT(X^\Inputs \times \Inputs)
            \xrightarrow{T(\mathrm{eval}_{X, \Inputs})}
        \FuncT(X)
            \xrightarrow{h}
        X.
    \end{align}
\end{proposition}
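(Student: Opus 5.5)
We have to check the unit and associativity laws for $h^{(\Inputs)}$. The map $h^{(\Inputs)}$ is the transpose of the composite in \eqref{eqn:currying-algebra}, and two maps into $X^\Inputs$ are equal iff their uncurryings $({-})\times\Inputs \to X$ agree. So we verify each law after uncurrying. Write $\mathrm{str}$ for the canonical strength of $\FuncT$; on a cartesian closed category it is characterised by the transpose of $\FuncT$'s action on internal homs. We use the standard strong-monad coherences:
\begin{itemize}
\item $\mathrm{str}\circ(\unit\times\id)=\unit$;
\item $\mathrm{str}\circ(\mult\times\id)=\mult\circ \FuncT\mathrm{str}\circ\mathrm{str}$;
\item naturality of $\mathrm{str}$ in its first argument.
\end{itemize}

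\textbf{Unit law.} We must show $h^{(\Inputs)}\circ\unit_{X^\Inputs}=\id$. Uncurrying, this means $h\circ\FuncT(\mathrm{eval})\circ\mathrm{str}\circ(\unit\times\id)=\mathrm{eval}$. By the first coherence the left side becomes $h\circ\FuncT(\mathrm{eval})\circ\unit_{X^\Inputs\times\Inputs}$. Naturality of $\unit$ turns this into $h\circ\unit_X\circ\mathrm{eval}$, which equals $\mathrm{eval}$ by the unit law for $(X,h)$.

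\textbf{Multiplication law.} We must show $h^{(\Inputs)}\circ\mult=h^{(\Inputs)}\circ\FuncT h^{(\Inputs)}$. For the left side, uncurrying and the second coherence give
\[h\,\FuncT(\mathrm{eval})\,\mult\,\FuncT\mathrm{str}\,\mathrm{str}.\]
Naturality of $\mult$ rewrites this as $h\,\mult_X\,\FuncT\FuncT(\mathrm{eval})\,\FuncT\mathrm{str}\,\mathrm{str}$. The algebra law for $h$ then gives
\[h\,\FuncT\bigl(h\,\FuncT(\mathrm{eval})\,\mathrm{str}\bigr)\,\mathrm{str}.\]
The inner bracket is exactly the uncurrying of $h^{(\Inputs)}$, i.e.\ $\mathrm{eval}\circ(h^{(\Inputs)}\times\id)$. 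Substituting, the expression becomes $h\,\FuncT(\mathrm{eval})\,\FuncT(h^{(\Inputs)}\times\id)\,\mathrm{str}$. Naturality of $\mathrm{str}$ moves $\FuncT(h^{(\Inputs)}\times \id)$ past the strength to $\mathrm{str}\circ(\FuncT h^{(\Inputs)}\times\id)$. The result is the uncurrying of $h^{(\Inputs)}\circ\FuncT h^{(\Inputs)}$, as required.

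\textbf{Main obstacle.} The only delicate point is the bookkeeping of the strength coherence for $\mult$ and the naturality of $\mathrm{str}$. These hold for the canonical strength; in $\Set$ they can simply be checked elementwise, since $\mathrm{str}(t,i)=\FuncT(x\mapsto(x,i))(t)$. Alternatively, one can note that $X^\Inputs\cong\prod_{i\in\Inputs}X$ in $\Set$. Algebras are closed under products, since the forgetful functor from algebras creates limits. It then suffices to check that the product structure has components $h\circ\FuncT(\pi_i)$, which is exactly \eqref{eqn:currying-algebra} evaluated at $i$. This gives a quick sanity check of the result.
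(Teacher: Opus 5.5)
Your proof is correct. The paper omits its own proof and says only that the result is standard and relies on the canonical strength, so your argument (uncurry, then apply the unit and multiplication coherences of the strength together with naturality) is essentially the standard argument the paper has in mind; your product-of-algebras check is also a complete proof in $\Set$. One caveat comes from the statement itself: a monad on an arbitrary cartesian closed category need not be strong, since a strength amounts to $\FuncT$ being enriched, i.e.\ having the action on internal homs you invoke. Your proof should therefore be read as assuming a strength, which is automatic on $\Set$, the only case the paper uses.
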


\begin{proof}
    Omitted.
\end{proof}

\begin{definition}[Structured Moore machines~\cite{silvaGeneralizingDeterminizationAutomata2013}]
    \label{ex:tAlgebraStructMooreMachine}
    Given a monad $(\FuncT, \unit, \mult)$ and sets $\Inputs$ and $\Observations$, a $\FuncT$-structured Moore machine is an $\FuncF \FuncT$-coalgebra $(\States,\transitionCoalg_\structmoore, \xi)$, where $\FuncF = \FuncT(\Observations)\times ({-})^\Inputs$, and $\xi : \FuncT(\FuncT(\Observations) \times \FuncT(\States)^\Inputs) \to \FuncT(\Observations) \times \FuncT(\States)^\Inputs$ is given by
    \begin{align}
        \xi \coloneqq 
        \FuncT(\FuncT(\Observations) \times \FuncT(\States)^\Inputs)
        \xrightarrow{\langle \FuncT \pi_{\FuncT(\Observations)}, \FuncT \pi_{\FuncT(\States)^\Inputs} \rangle}
        \FuncT(\FuncT(\Observations)) \times \FuncT(\FuncT(\States)^\Inputs)
        \xrightarrow{\mult_{\Observations}\times \mult_S^{(\Inputs)}}
        \FuncT(\Observations) \times \FuncT(\States)^\Inputs
    \end{align}
    where $\pi_\inputs$ are projections for the $i$'th component of the Cartesian product and $\mult_S^{(\Inputs)}$ is \cref{prop:algebra-on-exponental} applied to the free algebra $\FuncT(\States)$.
\end{definition}
We will sometimes refer to the map $\transitionCoalg_\structmoore:\States\to\FuncT(\Observations)\times\FuncT(\States)$ by its components, 
\begin{align}
    \transitionCoalg_\structmoore = \langle \transitionCoalgOut_\structmoore : \States \to \FuncT(\Observations) , \; \transitionCoalgTr_\structmoore : \States \to \FuncT(\States)^\Inputs \rangle.
\end{align}
For particular choices of $\FuncT$, we can use this definition to obtain nondeterministic Moore machines (for the finite nonempty powerset monad, $\PowerFinite$) and stochastic Moore machines (for the distribution monad, $\Dist$).
Taking the distribution monad as an example, we can regard a structured Moore machine as a machine that, on each time step, produces an output in $\Observations$ stochastically, as well as taking in an input from $\Inputs$ and then stochastically updating its state.
Since structured Moore machines are structured coalgebras, they can be determinised using~\cref{cst:determinisation}, yielding the following.

\begin{proposition}[Determinised structured Moore machines~\cite{silvaGeneralizingDeterminizationAutomata2013}]
    \label{ex:determinisedStructMoore}
    Let $(\States, \transitionCoalg_\structmoore : \States \to \FuncT(\Observations) \times \FuncT(\States)^\Inputs)$ be a structured Moore machine, understood as an $\FuncF \FuncT$-coalgebra.
    By determinising this structured Moore machine (\cref{cst:determinisation}), we obtain a coalgebra $(\FuncT(\States), \transitionCoalg_\structmoore^\sharp : \FuncT(\States) \to \FuncT(\Observations) \times \FuncT(\States)^\Inputs)$,
    with $\transitionCoalg_\structmoore^\sharp$ given by 
    \begin{align}
        \transitionCoalg^\sharp_\structmoore = \langle \transitionCoalgOut^\sharp_\structmoore : \FuncT(\States) \to \FuncT(\Observations) , \; \transitionCoalgTr^\sharp_\structmoore : \FuncT(\States) \to \FuncT(\States)^\Inputs \rangle,
    \end{align}
    where
    \begin{align}
        \transitionCoalgOut^\sharp_\structmoore =
        \FuncT(\States)\xrightarrow{\FuncT(\transitionCoalg_\structmoore)}
        \FuncT(\FuncT(\Observations) \times \FuncT(\States)^\Inputs)
        \xrightarrow{\FuncT(\pi_{\FuncT(\Observations)})}
        \FuncT(\FuncT(\Observations))
        \xrightarrow{\mult_\Observations}
        \FuncT(\Observations)
    \end{align}
    and    
    \begin{align}
        \transitionCoalgTr^\sharp_\structmoore =
        \FuncT(\States)\xrightarrow{\FuncT(\transitionCoalg_\structmoore)}
        \FuncT(\FuncT(\Observations) \times \FuncT(\States)^\Inputs)
        \xrightarrow{\FuncT(\pi_{\FuncT(\States)^\Inputs})}
        \FuncT(\FuncT(\States)^\Inputs)
        \xrightarrow{\mult_\States^{(\Inputs)}}
        \FuncT(\States)^{\Inputs}.
    \end{align}
\end{proposition}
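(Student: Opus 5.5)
The plan is to unfold \cref{cst:determinisation} with the specific $\xi$ of \cref{ex:tAlgebraStructMooreMachine}, and then split the result into its two components using the universal property of the product. By \cref{cst:determinisation}, $\transitionCoalg^\sharp_\structmoore = \xi \circ \FuncT(\transitionCoalg_\structmoore)$. Substituting the definition of $\xi$ gives the three-step composite
\begin{align}
    \FuncT(\States) \xrightarrow{\FuncT(\transitionCoalg_\structmoore)} \FuncT(\FuncT(\Observations)\times\FuncT(\States)^\Inputs) \xrightarrow{\langle \FuncT\pi_{\FuncT(\Observations)}, \FuncT\pi_{\FuncT(\States)^\Inputs}\rangle} \FuncT\FuncT(\Observations)\times\FuncT(\FuncT(\States)^\Inputs) \xrightarrow{\mult_\Observations\times\mult^{(\Inputs)}_\States} \FuncT(\Observations)\times\FuncT(\States)^\Inputs.
\end{align}

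A map into a product is determined by its composites with the two projections. I will therefore postcompose with $\pi_{\FuncT(\Observations)}$ and $\pi_{\FuncT(\States)^\Inputs}$. For the first component, I use $\pi_1\circ(a\times b) = a\circ\pi_1$ together with $\pi_1\circ\langle g,h\rangle = g$. This gives $\pi_{\FuncT(\Observations)}\circ\transitionCoalg^\sharp_\structmoore = \mult_\Observations\circ\FuncT(\pi_{\FuncT(\Observations)})\circ\FuncT(\transitionCoalg_\structmoore)$, which is exactly $\transitionCoalgOut^\sharp_\structmoore$. The same two identities applied to the second projection yield $\mult^{(\Inputs)}_\States\circ\FuncT(\pi_{\FuncT(\States)^\Inputs})\circ\FuncT(\transitionCoalg_\structmoore)$, which is $\transitionCoalgTr^\sharp_\structmoore$. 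Pairing the two components recovers $\transitionCoalg^\sharp_\structmoore$, as the proposition states.

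There is no real obstacle here, since the argument is pure diagram bookkeeping. The one point worth checking is a prerequisite that the statement takes for granted: \cref{cst:determinisation} requires $\xi$ to be a $\FuncT$-algebra. This holds because $\xi$ is, up to the canonical map $\FuncT(A\times B)\to\FuncT A\times\FuncT B$, the product of two algebras. The first is the free algebra $(\FuncT(\Observations),\mult_\Observations)$. The second is $(\FuncT(\States)^\Inputs,\mult^{(\Inputs)}_\States)$, which is an algebra by \cref{prop:algebra-on-exponental}. Products of algebras are algebras, with the structure map computed exactly in this way, so the hypotheses of \cref{cst:determinisation} are met.

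If desired, one could also spell out $\transitionCoalgTr^\sharp_\structmoore$ pointwise in $\inputs\in\Inputs$ using the strength from \cref{eqn:currying-algebra}. That pointwise description is not needed for the statement as written.
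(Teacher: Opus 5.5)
Your proposal is correct and follows the same route as the paper, whose proof simply substitutes the given $\xi$ into $f^\sharp = \xi \circ \FuncT f$ from the determinisation construction. Your projection-by-projection check, and your remark that $\xi$ is the product-algebra structure on $\FuncT(\Observations) \times \FuncT(\States)^\Inputs$, spell out details the paper leaves implicit.
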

\begin{proof}
    This follows directly from applying~\cref{cst:determinisation} to coalgebras of the form $(\States, \transitionCoalg_\structmoore : \States \to \FuncT(\Observations) \times \FuncT(\States)^\Inputs)$.
\end{proof}
In the case of the distribution monad, $\FuncT = \Dist$, this produces a machine whose state space is $\Dist(\States)$.
The output map $\transitionCoalgOut^\sharp_\structstochmoore$ takes as input a measure over $\States$ and pushes it forward along the Markov kernel $S\xrightarrow{f_{\structstochmoore}}\Dist(\Dist(\Observations))\times\Dist(\States)^\Inputs)\xrightarrow{\Dist(\pi_{\Dist(\Observations)})}\Dist(\Dist(\Observations))
\xrightarrow{\mult_{\Observations}}\Dist(\Observations)$.
The transition map $\transitionCoalgTr_\structstochmoore^\sharp$ does something similar, pushing forward a measure over $\States$ along the map $\Dist(\pi_{\Dist(\States)^\Inputs})\circ \transitionCoalg^\sharp_\structstochmoore$ to get a measure over functions $\Inputs\to \Dist(\States)$ (that is, an element of $\Dist(\Dist(\States)^\Inputs)$). The $\mult_\States^{(\Inputs)}$ step can be seen as taking an input in $\Inputs$ and passing it to each of these functions to get an element of $\Dist(\Dist(\States))$, and then averaging to get an element of $\Dist(\States)$.
But this doesn't involve conditioning on an observation, since no sample from the output distribution is considered.

From the perspective of someone interested in reinforcement learning and POMDPs however, it might seem surprising that these dynamics involve pushing forward a measure along a Markov kernel but that they don't involve a Bayesian conditioning step.
The reason for this becomes clearer when considering the final coalgebra semantics.
A final coalgebra is said to capture the behaviour of a coalgebra~\cite{ruttenUniversalCoalgebraTheory2000, jacobsIntroductionCoalgebraMathematics2017}.
More precisely, the elements of a final coalgebra (when it exists) are the possible observable behaviours of all objects of a given category of coalgebras.

Following standard arguments from automata theory,~\cite{silvaGeneralizingPowersetConstruction2010, silvaGeneralizingDeterminizationAutomata2013} argue that for $\FuncF \FuncT$-coalgebras, we should consider behaviours to be the final coalgebra of $\FuncF$ rather than $\FuncF \FuncT$, so that the behaviours of a $\FuncF \FuncT$-coalgebra are the $\FuncF$-behaviours of its determinisation.
This is because the final coalgebra of $\FuncF \FuncT$ is too fine-grained: it can consider two machines to have different behaviours even if they are indistinguishable to an external observer, because they differ in the behaviour of their internal state.
Intuitively, determinisation `absorbs' these internal differences.

We now consider the final coalgebra semantics in this sense for determinised stochastic Moore machines, which are structured Moore machines with $\FuncT = \Dist$. The functor $\FuncF$ in this case is $\Dist(\Observations) \times({-})^\Inputs$.

\begin{proposition}[$\FuncF$-behaviours for stochastic Moore machines~\cite{silvaGeneralizingPowersetConstruction2010, silvaGeneralizingDeterminizationAutomata2013}]
    The endofunctor $\Dist(\Observations) \times ({-})^\Inputs$ has final coalgebra:
    \begin{align}
        \left( \Dist(\Observations)^{\Inputs^\star}, \transitionCoalgFinal_\structstochmoore^\sharp : \Dist(\Observations)^{\Inputs^\star} \to \Dist(\Observations) \times \left( \Dist(\Observations)^{\Inputs^\star} \right)^\Inputs \right).
    \end{align}
\end{proposition}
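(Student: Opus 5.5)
We write $B \coloneqq \Dist(\Observations)^{\Inputs^\star}$ and show directly that it is final. The structure map sends a behaviour $\beta \in B$ to
\begin{align}
    \transitionCoalgFinal_\structstochmoore^\sharp(\beta) = \bigl( \beta(\varepsilon), \; \inputs \mapsto \beta_\inputs \bigr),
\end{align}
where $\beta_\inputs(w) \coloneqq \beta(\inputs w)$ is the left derivative of $\beta$ along $\inputs$. The argument is the standard one for Moore automata with output set $\Dist(\Observations)$. Nothing about $\Dist$ beyond its being a fixed set of outputs is used, so this is an instance of the general fact that $A \times ({-})^\Inputs$ has final coalgebra $A^{\Inputs^\star}$, taken here with $A = \Dist(\Observations)$.

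Given an arbitrary coalgebra $(X, \langle o, t \rangle : X \to \Dist(\Observations) \times X^\Inputs)$, we define the candidate homomorphism $h : X \to B$ by recursion on words. For the empty word we set $h(x)(\varepsilon) = o(x)$, and for longer words $h(x)(\inputs w) = h(t(x)(\inputs))(w)$. Equivalently,
\begin{align}
    h(x)(\inputs_1 \cdots \inputs_n) = o\bigl(t(\cdots t(t(x)(\inputs_1))(\inputs_2) \cdots)(\inputs_n)\bigr).
\end{align}

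We then check that $h$ is a homomorphism, which amounts to two equalities. The first is $\beta(\varepsilon) = o(x)$ for $\beta = h(x)$, which holds by definition. The second is $h(x)_\inputs = h(t(x)(\inputs))$, which follows from the recursive clause. For uniqueness, suppose $g : X \to B$ is another homomorphism. The homomorphism equations force $g(x)(\varepsilon) = o(x)$ and $g(x)(\inputs w) = g(t(x)(\inputs))(w)$. Induction on the length of $w$, simultaneously for all $x$, then gives $g(x)(w) = h(x)(w)$.

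There is no serious obstacle here. The only point needing care is that the induction is over words of all lengths uniformly in the state $x$. Once the structure map on $B$ is fixed as above, this is routine.
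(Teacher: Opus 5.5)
Your proof is correct and takes the paper's route. The paper proves this only by citing Jacobs' Proposition 2.3.5, the general fact that $B \times ({-})^\Inputs$ has final coalgebra $B^{\Inputs^\star}$, instantiated at $B = \Dist(\Observations)$, which is the same reduction you make. You go further by writing out the standard proof of that fact (behaviour map by recursion on words, uniqueness by induction on word length uniformly in the state) and by making explicit the structure map $\beta \mapsto (\beta(\varepsilon), \inputs \mapsto \beta_\inputs)$, which the paper leaves implicit.
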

\begin{proof}
    This is obtained by applying~\cite[Proposition 2.3.5]{jacobsIntroductionCoalgebraMathematics2017} with the set $B = \Dist(\Observations)$.
\end{proof}

Elements of the carrier of this final coalgebra are maps $g : \Inputs^\star \to \Dist(\Observations)$ assigning to each finite input word $w \in \Inputs^\star$ a distribution $g(w) \in \Dist(\Observations)$ on observations. 
Equivalently, for each $\observations \in \Observations$, the quantity $g(w)(\observations) \in [0, 1]$ is the probability of observing $\observations$ after reading $w$. 
This generalises the usual semantics of (Rabin) probabilistic automata~\cite{rabinProbabilisticAutomata1963}: when $\Observations = 2$, each distribution $g(w) \in \Dist(2)$ is determined by a single scalar in $[0, 1]$, namely the probability assigned to the designated accepting output, so one recovers the standard acceptance probability of the word $w$.
We note that the output here is stochastic, but it only depends on the sequence of previous inputs.
This makes sense in terms of Rabin probabilistic automata, where the idea is to give a word as input to the machine, and obtain as output the probability of accepting or rejecting the word.
Since we only ever consider one output in this situation there is no need to invoke conditioning via Bayes, since there are no observations to condition on, besides the last output.

However, in different situations involving agents solving, e.g. POMDPs in reinforcement learning and control theory, one is interested not only in the final output given a sequence of inputs, but in the whole sequence of intermediate outputs that the machine emits as well.
This is because a controller can observe these intermediate outputs from the POMDP and might choose to give different inputs depending on what they are.
For this reason, we desire different semantics in a POMDP setting.
It turns out that this can be achieved in the determinisation framework of~\cite{silvaGeneralizingDeterminizationAutomata2013}, but we must choose a different class of automata.
This will be the topic of the rest of the paper.

\section{Monads with support}
For the purpose of this paper, and particularly for the instantiation of unifilarisation in coalgebraic terms following~\cite{virgoUnifilarMachinesAdjoint2023}, 
we will from now on to work with Mealy machines rather than Moore machines.
We will however need Mealy machines of a particular kind, so before defining them formally, we introduce the notion of monad with support and a few examples of it.

Monads with support are monads $\FuncT$ that come with some extra structure, which we call \emph{support}, because in the case of the distribution monad it corresponds to the support of a distribution.
This will be important because for monads with support we will obtain the finitely supported version of the definition of unifilarisation given in~\cite{virgoUnifilarMachinesAdjoint2023}, which will then also be derived from the coalgebraic determinisation construction of~\cite{silvaGeneralizingDeterminizationAutomata2013}.

As in the case of the coalgebraic determinisation of~\cite{silvaGeneralizingPowersetConstruction2010, silvaGeneralizingDeterminizationAutomata2013}, we restrict our attention here to monads on $\Set$.
This means that we do not attempt to extend~\cref{def:monadSupport} to measure-theoretic treatments.
The notion of monad with support will probably not directly transfer to such contexts, since the notion of support is quite subtle in category-theoretic probability~\cite{fritzAbsoluteContinuitySupports2026}, although see~\cref{rmk:relationships}.
\begin{definition}[Monad with support]
    \label{def:monadSupport}
    A \emph{monad with support} consists of a monad $T$ on $\mathbf{Set}$ together with, for each $X$ and each $p\in \FuncT(X)$, a subset $\supp(p) \subseteq X$ called the \emph{support} of $p$, and for each $X, Y$ an isomorphism 
    \begin{align}
    \label{eqn:monad-with-support-iso}
        \Phi_{X, Y} : \FuncT(X \times Y) \xrightarrow{\cong} \coprod_{p \in \FuncT(X)} \FuncT(Y)^{\supp(p)},
    \end{align}
    natural in $Y$.
\end{definition}

The reason for requiring naturality in $Y$ but not $X$ is that there is no obvious way to make the right-hand side of \cref{eqn:monad-with-support-iso} functorial in $X$.

The definition of $\supp$ depends on the specific monad.
We give some examples below, namely the distribution monad and the finite nonempty powerset monad.
Examples that don't fit the definition (at least not in the most obvious way) include the powerset monad (finite or not), the partiality monad and the subdistribution monad.

\begin{remark}[Relationship to hyper normalisation and strongly representable Markov categories]
    \label{rmk:relationships}
    In \cite{jacobs2017hyper}, conditional probability is formulated in terms of a what amounts to a map $\Dist(X \times Y)\xrightarrow{h_{X,Y}} \Dist(X \times \Dist(Y))$ called \emph{hyper normalisation}, where $D$ is the distribution monad.
    \cite{jacobs2017hyper} requires $X$ to be finite, but this is not necessary in order to define such a map.
    An intuition is that if we start with a joint distribution between $X$ and $Y$ and observe $X$, we will be left with a sample $x$ from $X$ and the conditional distribution over $Y$ given $x$.
    Since $x$ is an outcome of a random variable, the conditional distribution over $Y$ is also a random variable.
    This explains the domain of the hyper normalisation map, $\Dist(X \times \Dist(Y))$: it is a joint distribution between samples and the conditional distributions given those samples.
    
    A hyper normalisation-like map can be defined for any monad with support on $\mathbf{Set}$.
    It is given by the composite
    \begin{equation}
        \FuncT(X\times Y)
        \xrightarrow{\Phi_{X,Y}}
        \coprod_{p\in \FuncT(X)}T(Y)^{\supp(p)}
        \xrightarrow{\coprod_{p\in \FuncT(X)}{\eta_{T(Y)}}^{\supp(p)}}
        \coprod_{p\in \FuncT(X)}
        \FuncT(T(Y))^{\supp(p)}
        \xrightarrow{\Phi^{-1}_{X,T(Y)}}
        \FuncT(X\times \FuncT(Y)).
    \end{equation}
    In the case of the distribution monad, which is a monad with support as described in detail below, this recovers \cite{jacobs2017hyper}'s hyper normalisation map.
    Our monads with support could thus be seen as a generalisation of hyper normalisation.
    
    On the other hand, \cite{virgoUnifilarMachinesAdjoint2023}'s version of unifilarisation is expressed in terms of strongly representable Markov categories \cite{fritz2020-representable}, which are a different framework from monads with support.
    In particular, an example of a monad generating a strongly representable Markov category is the Giry monad on standard Borel spaces, which doesn't seem to be a monad with support because there doesn't seem to be a useful notion of support that applies to all morphisms in $\mathbf{BorelStoch}$; see theorem 3.2.7 of \cite{fritz2023absolute}, for example.
    
    A hyper normalisation-like map can be constructed in any strongly representable Markov category as well.
    In the notation of \cite{fritz2020-representable}, we apply the defining property of a strongly representable Markov category to the sampling map $\mathsf{samp}_{X\otimes Y}$ to get a stochastic map $\mathsf{samp}_{X\otimes Y}^{\sharp Y}:P(X\otimes Y)\to X\otimes P(Y)$, and then take its deterministic version to get a deterministic map $(\mathsf{samp}_{X\otimes Y}^{\sharp Y})^\sharp:P(X\otimes Y)\to P(X\otimes P(Y))$, which has the same type as the hyper normalisation map.
    In the case of the Kleisli category of the distribution monad, this also recovers Jacobs' construction.
    
    Thus, monads with support and strongly representable Markov categories can both be seen as generalisations of hyper normalisation.
    We don't currently know whether strongly representable Markov categories are strictly more general than monads with support, or whether the two might admit a common generalisation.
    We also don't know how these observations relate to other generalisations of hyper normalisation found in \cite{bohinen2025categorical} and \cite{GARNER2023105044}.
\end{remark}

\subsection{Examples of monads with support}

Our first example is the main motivating example for the concept.

\begin{example}
    The distribution monad $\Dist$ is a monad with support.
\end{example}
To see this, we give some definitions from elementary probability theory, in the language of the distribution monad.
In the following, given $p \in \Dist(X)$ and $x \in X$, we write $p(x)$ for the probability assigned to $x$ according to the distribution $p$, and similarly we write $p'(x,y)$ for the probabilities assigned by a distribution $p' \in \Dist(X \times Y)$.

\begin{definition}[Support of a probability distribution]
    Let $\Dist$ be the distribution monad. 
    Given a set $X$, let the map $\supp_X : \Dist(X) \to \Set$ be given by $\supp_X(p) = \{x \in X \mid p(x) > 0\}$.
    Since there is little danger of ambiguity we will just write this as $\supp(p)$.
\end{definition}

\begin{definition}[Marginal probability]
    \label{def:marginalProbability}
    Given an element $p$ of $\Dist(X \times Y)$ (seen as a joint distribution between $X$ and $Y$), define the \emph{marginal over $X$} as $p_X \in \Dist(X)$, where $p_X(x) = \sum_y p(x,y)$.
\end{definition}
This is well defined because since $p$ is finitely supported then there are only finitely many $x \in X$ where $p_X(x)$ is nonzero, so $p_X$ is finitely supported as well.
We now define conditional probability distributions, restricting their domain in order to avoid dividing by zero.
\begin{definition}[Conditional probability]
    \label{def:conditionalProbability}
    Given $p \in \Dist(X \times Y)$ and $x \in \supp{p_X}$, define the \emph{conditional distribution over $Y$ given $x$} as $p_{Y \mid x} \in \Dist(Y)$, where $p_{Y \mid x}(y) = p(x,y)/p_X(x)$.
\end{definition}
We can again note that this is well defined, because since $p$ is finitely supported, there are only finitely many $y$ for which $p_{Y \mid x}(y)>0$.
We can then express a basic fact about probability:
\begin{lemma}
    \label{thm:disintegration}
    Given sets $X$ and $Y$, there is an isomorphism
    \begin{align}
        \Phi_{X, Y} : \Dist(X \times Y) \xrightarrow{\cong} \coprod_{p \in \Dist(X)} \Dist(Y)^{\supp(p)}
    \end{align}
    natural in $Y$.
    In the $\Dist(X \times Y) \to \coprod_{p \in \Dist(X)} \Dist(Y)^{\supp(p)}$ direction, this is given by ``disintegrating'' the distribution into a marginal and a conditional:
    \begin{align}
        \label{eqn:disintegration}
        p \mapsto (
            p_X
            \,,\,
            \lambda x \,.\, p_{Y \mid x}
        ).
    \end{align}
    The inverse in the $\coprod_{p \in \Dist(X)} \Dist(Y)^{\supp(p)}\to \Dist(X \times Y)$ direction maps $(
        q_X
        \,,\,
        \lambda x \,.\, q_{Y\mid x}
    )
    $
    to $q \in \Dist(X \times Y)$, where $q(x,y) \coloneqq q_X(x)q_{Y\mid x}(y)$.
\end{lemma}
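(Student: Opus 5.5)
The plan is to verify directly that the two maps described in the statement are well defined, mutually inverse, and that the forward map is natural in $Y$. For well-definedness of the forward map: given $p \in \Dist(X \times Y)$, \cref{def:marginalProbability} gives $p_X \in \Dist(X)$. For each $x \in \supp(p_X)$, \cref{def:conditionalProbability} gives $p_{Y\mid x} \in \Dist(Y)$; its values sum to $\sum_y p(x,y)/p_X(x) = 1$ and it is finitely supported. Hence $(p_X, \lambda x.\, p_{Y\mid x})$ lies in the summand indexed by $p_X$.

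For the inverse, take $(q_X, \lambda x.\, q_{Y\mid x})$ and define $q(x,y) = q_X(x) q_{Y\mid x}(y)$ for $x \in \supp(q_X)$, and $q(x,y) = 0$ otherwise. This convention is needed because $q_{Y\mid x}$ is only defined on the support. The function $q$ is finitely supported, being a finite union of finite supports, and it sums to $\sum_x q_X(x)\cdot 1 = 1$.

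Next come the two round trips. Starting from $(q_X, \lambda x.\, q_{Y\mid x})$, the marginal of $q$ is $\sum_y q_X(x) q_{Y\mid x}(y) = q_X(x)$, so the summand index is recovered. Since $\supp(q_X)$ is then unchanged, for $x$ in it the conditional is $q(x,y)/q_X(x) = q_{Y\mid x}(y)$. Starting from $p$, the reconstructed value is $p_X(x)\, p(x,y)/p_X(x) = p(x,y)$ when $p_X(x) > 0$. When $p_X(x) = 0$, every $p(x,y)$ is $0$ by nonnegativity, which matches the convention. The only delicate point is this bookkeeping with zero-probability $x$ and the dependence of the exponent $\supp(p)$ on the summand index. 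So we must be careful that equality in the coproduct means equality of indices first, and only then equality of the functions on the common domain.

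For naturality in $Y$, take $g : Y \to Y'$. The action on the right-hand side is $\coprod_{p}\Dist(g)^{\supp(p)}$, which fixes the index $p$ and postcomposes each conditional with $\Dist(g)$. On the left, $\Dist(X \times g)$ sends $p$ to $p'(x,y') = \sum_{g(y)=y'} p(x,y)$. We then check two facts. First, the marginal $p'_X = p_X$, since summing over $y'$ gives the same total. Second, for $x \in \supp(p_X)$, $p'_{Y'\mid x}(y') = \sum_{g(y)=y'} p(x,y)/p_X(x) = \Dist(g)(p_{Y\mid x})(y')$. Together these show $\Phi_{X,Y'} \circ \Dist(X \times g) = \bigl(\coprod_p \Dist(g)^{\supp(p)}\bigr) \circ \Phi_{X,Y}$. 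Naturality of the inverse then follows automatically.
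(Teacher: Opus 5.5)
Your proof is correct and takes essentially the same approach as the paper. The paper simply cites the standard marginal/conditional disintegration for the distribution monad (Example 3.8 of Cho--Jacobs) and calls its application straightforward; your direct verification supplies exactly the details it leaves implicit, namely the zero-mass convention for the inverse, equality in the coproduct (index first, then the functions on $\supp(p_X)$), and the naturality check in $Y$.
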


\begin{proof}
    The disintegration of a joint distribution into marginal and conditional is a well known feature of elementary probability theory; see example 3.8 of \cite{choDisintegrationBayesianInversion2019} for example. 
    Its application here is straightforward.
\end{proof}
In the following, $\PowerFinite^+$ refers to the nonempty powerset monad, which is the same as the usual finite powerset monad $\PowerFinite$, except that we exclude the empty set, so that $\PowerFinite^+(X) = \{ U \subseteq X \mid U \text{ is finite and } U \ne \varnothing \}$.

\begin{example}
    The finite nonempty powerset monad $\PowerFinite^+$ is a monad with support, where $\supp(U) = U$.
\end{example}

That is, since $U\in \PowerFinite^+$ is already a set, $U\subseteq X$, we can take the support to be $U$ itself.
The isomorphism 
\begin{align}
    \Phi_{X, Y} : \PowerFinite^+(X \times Y) \xrightarrow{\cong} \coprod_{A \in \PowerFinite^+(X)} \PowerFinite^+(Y)^{\supp(A)} .
\end{align}
is then given in the forward direction by 
\begin{align}
    \label{eqn:finite-nonempty-powerset}
    U \mapsto \big(
        \{x\in X \mid \exists y \in Y. (x,y)\in U\}
        \,,\,
        \lambda x \,.\, \{y\in Y\mid (x,y)\in U\}
    \big)
\end{align}
and in the reverse direction by
\begin{align}
    (U_X,f)\mapsto \{(x,y) \in X \times Y \mid x \in U_X,\ y \in f(x)\}.
\end{align} 

\begin{remark}
    Another example of a monad with support is the nonempty powerset monad $\Power^+$, since nothing above requires the finiteness condition.
    However, like the powerset monad, the nonempty powerset monad lacks a final coalgebra.
\end{remark}

\section{Unifilarisation as determinisation of Mealy machines}

\subsection{Mealy machines}
As hinted at before, we now formally switch from Moore machines to Mealy machines because unifilarisation is naturally formulated in terms of one-step joint behaviour of outputs and next states. 
This is the key structural difference from the Moore case: in a Mealy coalgebra $\States \to \FuncT(\Observations \times \States)^\Inputs$, the output and next state are packaged together, so after averaging one can rewrite the resulting joint distribution as an output marginal together with posterior next-state distributions.
That rewriting step is what later produces Bayesian updates.
We thus initially introduce Mealy machines as coalgebras following~\cite{ruttenUniversalCoalgebraTheory2000, jacobsIntroductionCoalgebraMathematics2017}.
To capture existing definitions that include deterministic, nondeterministic, and stochastic Mealy machines~\cite{virgoUnifilarMachinesAdjoint2023, bonchiEffectfulMealyMachines2025a}, we parametrise our definition with a monad $\FuncT$.
\begin{definition}[Mealy machines]
    \label{def:mealyMachines}
    Mealy machines are coalgebras of the functor
    \begin{align}
        \FuncF_\structmealy({-}) = \FuncT(\Observations \times {-})^\Inputs.
    \end{align}
\end{definition}
Using monads with support~\cref{def:monadSupport}, we then introduce the following notion.

\begin{definition}[Supported Mealy machines]
    \label{def:mealyWithSupport}
    Given a monad with support $\FuncT$, supported Mealy machines are coalgebras of the functor
    \begin{align}
        \FuncF_\suppmealy({-}) = \left(\coprod_{p \in \FuncT(\Observations)} \FuncT({-})^{\supp(p)}\right)^\Inputs.
    \end{align}
\end{definition}
A supported Mealy machine can be seen as a gadget that starts in some state in $X$, receives an input in $\Inputs$, and then produces both an output in $\Observations$ and its own next state in $X$ under the branching specified by the monad $\FuncT$.
In general the output and the next state can be correlated, given the previous state and the input.
Calling these ``Mealy machines'' is justified by \cref{eqn:monad-with-support-iso}, which implies that $\FuncF_\suppmealy\cong \FuncF_\structmealy = \FuncT(O\times{-})^\Inputs$.

\begin{remark}
    Since we will only consider supported Mealy machines from now on, we shall refer to supported Mealy machines simply as Mealy machines.
\end{remark}
An important example of Mealy machines is that of stochastic Mealy machines.
\begin{definition}[Stochastic Mealy machines]
    Stochastic Mealy machines are coalgebras of the functor $\FuncF_\stochsuppmealy$, i.e. Mealy machines for $\FuncT = \Dist$.
\end{definition}
Stochastic Mealy machines correspond to partially observable Markov decision processes (POMDPs) in control theory and reinforcement learning. \footnote{We note that in the reinforcement learning literature it is also common to formulate POMDPs as stochastic Moore machines.}
In reinforcement learning, POMDPs usually also include also a map to represent rewards but in traditional control theory this is not necessarily the case~\cite{astromOptimalControlMarkov1965}.

Next we show how unifilarisation can be expressed for Mealy machines and in particular for stochastic ones, which capture the definition by~\cite{virgoUnifilarMachinesAdjoint2023}.

\subsection{Unifilarisation of Mealy machines}

Here we adapt the unifilarisation definition given in~\cite{virgoUnifilarMachinesAdjoint2023} to our framework, in which Mealy machines are seen as coalgebras.
To do that, we first introduce \emph{unifilar} Mealy machines.
These are a variation on the concept of a Mealy machine, which is somewhat less intuitive but is closely related to a central concept in computational mechanics, that of a unifilar hidden Markov process~\cite{shaliziComputationalMechanicsPattern2001, barnettComputationalMechanicsInput2015}.
\begin{definition}[Unifilar Mealy machines]
    \label{def:unifilarMealyMachines}
    Unifilar Mealy machines are coalgebras of the (polynomial) functor
    \begin{align}
        \label{eqn:unifilar-machines-functor}
        \FuncF_\structunifmealy({-}) = \left(\coprod_{p \in \FuncT(\Observations)} ({-})^{\supp(p)}\right)^\Inputs.
    \end{align}
\end{definition}
A unifilar Mealy machine can be seen as a gadget that starts in some state in $X$, takes an input in $\Inputs$, generates an output in $\Observations$ with branching specified by $\FuncT$, and then \emph{deterministically} updates its state, as a function of both the given input and its own generated output, which can be stochastic, nondeterministic, etc.
Its next state is only defined on outputs that it produces with positive probability, given its current state and input.

We will show shortly that, for $\FuncT = \Dist$, Mealy machines can be given a $\FuncT$-algebra structure such that the determinisation of a Mealy machine is a unifilar Mealy machine.
Before we do that, however, we first spell out what the definition of unifilarisation from~\cite{virgoUnifilarMachinesAdjoint2023} amounts to when specialised to the distribution monad, in our current notation.
We will later show that determinisation produces the same result.

\begin{definition}[Unifilarisation of stochastic Mealy machines~\cite{virgoUnifilarMachinesAdjoint2023}]
    \label{def:unifilarisation}
    Let $(\States, \transitionCoalg_\stochsuppmealy)$
    be a stochastic Mealy machine. 
    Its unifilarisation is the unifilar Mealy machine $(\Dist(\States), \transitionCoalg_\stochunifmealy)$ where transitions
    \begin{align}
        \transitionCoalg_\stochunifmealy : \Dist(\States) \to \left(\coprod_{p \in \Dist(\Observations)} \Dist(\States)^{\supp(p)}\right)^\Inputs
    \end{align}
    are defined as follows.
    For $q \in \Dist(\States)$ and $\inputs \in \Inputs$, first define the joint distribution
    \begin{align}
        \label{eq:unifilarisation-joint}
        \rho_{q, \inputs}(\observations, \states') \coloneqq
        \sum_{s \in \States} q(\states)\,\Phi_{\Observations,\States}^{-1} \left(\transitionCoalg_\stochsuppmealy(\states)(\inputs) \right)(o, \states').
    \end{align}
    This takes a little unpacking.
    The idea is that for each state $\states$, $\transitionCoalg_\stochsuppmealy(\states)(\inputs)$ is an element of $\left(\coprod_{p \in \Dist(\Observations)} \FuncT(\States)^{\supp(p)}\right)$, and so we can turn it into a joint distribution over $\Observations \times \States$ via the isomorphism $\Phi_{\Observations, \States}^{-1}$.
    We then take the average of all of these.
    Conceptually, all we are doing here is calculating the expected distribution over outputs $s', o$ of the Mealy machine, given an input $i$ and a distribution $q$ over initial states.
    
    Next we define the output marginal over $\Observations$ and, for each $\observations \in \supp(p_{q, \inputs})$, the posterior next-state distribution:
    \begin{align}
        \label{eqn:explicit-unifilarisation}
        p_{q, \inputs}(\observations) \coloneqq\sum_{\states' \in \States}\rho_{q, \inputs}(\observations, \states') 
        \qquad
        k_{q, \inputs}(\observations)(\states') \coloneqq\frac{\rho_{q, \inputs}(\observations, \states')}{p_{q, \inputs}(\observations)}.
    \end{align}
    We then set $\transitionCoalg_\stochunifmealy(q)(\inputs) \coloneqq (p_{q, \inputs}, k_{q, \inputs})$.
\end{definition}

Unifilarisation produces an epistemic model, a unifilar Mealy machine, from a given stochastic Mealy machine that could (but doesn't have to) represent an environment for an agent.
It is called ``epistemic'' because the state space of the unifilar Mealy machine is given by \emph{beliefs} about $X$, $\Dist(X)$, that could be said to belong to an idealised Bayesian reasoner, or to an agent.
The idea is that the reasoner's prior at any given time is an element of the state space $\Dist(X)$, which is updated following Bayesian filtering.
This process underlies the construction of belief MDPs in reinforcement learning~\cite{kaelblingPlanningActingPartially1998} and their previous formulations in terms of state information and sufficient statistics for control in control theory~\cite{astromOptimalControlMarkov1965, striebelSufficientStatisticsOptimum1965}, which correspond to unifilar Mealy machines with marginalised observations~\cite{virgoUnifilarMachinesAdjoint2023}.
It also generalises the construction of mixed state presentations in computational mechanics~\cite{ellisonPredictionRetrodictionAmount2009, crutchfieldExactComplexitySpectral2016}.
For $\FuncT = \PowerFinite^+$, we obtain the same construction proposed by~\cite{virgoUnifilarMachinesAdjoint2023} since the Kleisli category of the (finite) nonempty powerset monad is also a strongly representable Markov category.

Next we can note that $\FuncF_\suppmealy = \FuncF_\structunifmealy\FuncT$, so that (supported) Mealy machines are coalgebras of $\FuncF_\structunifmealy \FuncT$.
After showing that coalgebras of this type carry a $\FuncT$-algebra structure, we will then prove that the determinisation construction applied to these coalgebras will give unifilar machines implementing Bayesian updates.

\subsection{Determinisation of Mealy machines}
\label{sec:detMealyMachines}

Mealy machines can be equipped with a $\FuncT$-algebra structure.
To do so, we first give $\coprod_{t\in \FuncT(\Observations)}\FuncT(\States)^{\supp(t)}$ an algebra structure.

\begin{proposition}
    The map $\nu : \FuncT \left( \coprod_{t \in \FuncT(\Observations)} \FuncT(\States)^{\supp(t)} \right)\to
        \coprod_{t \in \FuncT(\Observations)} \FuncT(\States)^{\supp(t)}$,
        given by
    \begin{align}
        \nu \coloneqq 
        \FuncT \left( \coprod_{t\in \FuncT(\Observations)}\FuncT(\States)^{\supp(t)}
        \right)
        \xrightarrow{\FuncT(\Phi^{-1}_{\Observations, \States})}
        \FuncT \left( \FuncT(\Observations \times \States) \right)
        \xrightarrow{\mult_{\Observations \times \States}}
        \FuncT(\Observations \times \States)
        \xrightarrow{\Phi_{\Observations, \States}}
        \coprod_{t \in \FuncT(\Observations)}\FuncT(\States)^{\supp(t)},
    \end{align}
    induces an algebra structure on $\coprod_{t\in \FuncT(\Observations)}\FuncT(\States)^{\supp(t)}$.
\end{proposition}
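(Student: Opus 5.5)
The plan is to transport the free algebra structure along the isomorphism $\Phi_{\Observations,\States}$. Write $A \coloneqq \coprod_{t\in \FuncT(\Observations)}\FuncT(\States)^{\supp(t)}$ and $\Phi \coloneqq \Phi_{\Observations,\States} : \FuncT(\Observations\times\States)\to A$, which is an isomorphism by \cref{def:monadSupport}. The free algebra $(\FuncT(\Observations\times\States), \mult_{\Observations\times\States})$ is a $\FuncT$-algebra by the monad laws. The general fact we use is that if $(B,\beta)$ is a $\FuncT$-algebra and $\Phi: B\to A$ is an isomorphism, then $\alpha \coloneqq \Phi\circ\beta\circ\FuncT(\Phi^{-1})$ is a $\FuncT$-algebra structure on $A$, and $\Phi$ is then an algebra isomorphism. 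With $\beta = \mult_{\Observations\times\States}$ this $\alpha$ is exactly $\nu$, so it suffices to verify the two algebra axioms for such an $\alpha$.

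For the unit law we compute
\[
\alpha\circ\unit_A = \Phi\circ\beta\circ\FuncT(\Phi^{-1})\circ\unit_A = \Phi\circ\beta\circ\unit_B\circ\Phi^{-1} = \Phi\circ\Phi^{-1} = \id_A,
\]
using naturality of $\unit$ and the unit law for $(B,\beta)$; in our case that law is $\mult\circ\unit_{\FuncT} = \id$.

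For the associativity law we compute
\begin{align*}
\alpha\circ\FuncT\alpha
&= \Phi\circ\beta\circ\FuncT(\Phi^{-1})\circ\FuncT\Phi\circ\FuncT\beta\circ\FuncT\FuncT(\Phi^{-1})\\
&= \Phi\circ\beta\circ\FuncT\beta\circ\FuncT\FuncT(\Phi^{-1}).
\end{align*}
By the associativity law for $(B,\beta)$, which here is $\mult\circ\FuncT\mult = \mult\circ\mult_{\FuncT}$, this equals $\Phi\circ\beta\circ\mult_B\circ\FuncT\FuncT(\Phi^{-1})$. Naturality of $\mult$ then gives $\mult_B\circ\FuncT\FuncT(\Phi^{-1}) = \FuncT(\Phi^{-1})\circ\mult_A$, so the whole expression is $\alpha\circ\mult_A$, as required.

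There is no real obstacle here. The only points to handle carefully are that $\Phi_{\Observations,\States}$ is a genuine bijection, so that $\Phi\circ\Phi^{-1}=\id$ and $\FuncT(\Phi^{-1})\circ\FuncT(\Phi) = \id$ by functoriality, and that the naturality of $\unit$ and $\mult$ is applied to the function $\Phi^{-1}$ between sets. Naturality of $\Phi$ in $Y$ is not needed for this proposition; it only matters later, when the algebra structure is lifted to the exponential via \cref{prop:algebra-on-exponental}. The first step of the write-up should also note explicitly that $\Phi$ is an isomorphism of $\FuncT$-algebras, since the later identification of the determinisation with unifilarisation will use it.
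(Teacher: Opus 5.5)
Your proof is correct and is essentially the paper's argument. The paper only says the algebra equations ``can be verified straightforwardly'', and transporting the free algebra $(\FuncT(\Observations\times\States),\mult_{\Observations\times\States})$ along the bijection $\Phi_{\Observations,\States}$, using the monad laws together with naturality of $\unit$ and $\mult$ at $\Phi^{-1}$, is exactly that verification.

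One small correction to your closing aside: \cref{prop:algebra-on-exponental} needs only that $\nu$ is an algebra, not naturality of $\Phi$ in $Y$. That naturality is instead what makes $\FuncF_\suppmealy\cong\FuncF_\structmealy$ a natural isomorphism of functors.
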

\begin{proof}
    The equations defining an algebra can be verified straightforwardly.
\end{proof}
Then $\FuncF_\suppmealy(S) = \left(\coprod_{t\in \FuncT(\Observations)}\FuncT(\States)^{\supp(t)}
\right)^\Inputs$ has an algebra structure given by
\begin{align}
    \label{eq:algebraMorphStructMealy}
    \xi \coloneqq \FuncT\left(\left(\coprod_{t\in \FuncT(\Observations)}\FuncT(\States)^{\supp(t)}
    \right)^\Inputs\right)
    \xrightarrow{\nu^{(\Inputs)}}   \left(\coprod_{t\in \FuncT(\Observations)}\FuncT(\States)^{\supp(t)}
    \right)^\Inputs, 
\end{align}
which is an algebra structure by \cref{prop:algebra-on-exponental}.

Since Mealy machines with this $\FuncT$-algebra structure are $\FuncF \FuncT$-algebras, they can be determinised using~\cite{silvaGeneralizingDeterminizationAutomata2013}'s construction, which yields a determinised machine 
\begin{align}
    \label{eq:detMachineUnif}
    (\FuncT(\States), \transitionCoalg_\suppmealy^\sharp = \xi \circ \FuncT \transitionCoalg_\suppmealy)
\end{align}
with $\xi$ as in \cref{eq:algebraMorphStructMealy}.

As our main result, we now show that in the case $\FuncT=\Dist$, these determinised machines are exactly the ones given by the unifilarisation of~\cite{virgoUnifilarMachinesAdjoint2023} in the form of \cref{def:unifilarisation}.

\begin{theorem}[Unifilarisation is determinisation]
    Let $(\States, \transitionCoalg_\stochsuppmealy : \States \to \left(\coprod_{p \in \Dist(\Observations)} \Dist(\States)^{\supp(p)}\right)^\Inputs$ be a stochastic Mealy machine, equipped with the algebra map in \cref{eq:algebraMorphStructMealy}, with $\FuncT=\Dist$.
    Its determinisation is exactly its unifilarisation according to \cref{def:unifilarisation}.
\end{theorem}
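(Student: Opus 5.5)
The plan is to unfold $\transitionCoalg^\sharp_\suppmealy = \xi \circ \Dist\transitionCoalg_\stochsuppmealy$ pointwise: fix a prior $q \in \Dist(\States)$ and an input $\inputs \in \Inputs$, compute $\transitionCoalg^\sharp_\suppmealy(q)(\inputs)$, and show it equals the pair $(p_{q,\inputs}, k_{q,\inputs})$ of \cref{def:unifilarisation}. Both sides are functions $\Dist(\States) \to (\coprod_{p} \Dist(\States)^{\supp(p)})^\Inputs$, so componentwise agreement for every $q$ and $\inputs$ suffices.

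The first step unpacks $\xi = \nu^{(\Inputs)}$ using \cref{prop:algebra-on-exponental}. For $\Dist$, the canonical strength sends $(m, \inputs)$ to the pushforward of $m$ along $g \mapsto (g,\inputs)$. Composing with $\Dist(\mathrm{eval})$ therefore gives the pushforward of $m$ along evaluation at $\inputs$. Taking $m = \Dist(\transitionCoalg_\stochsuppmealy)(q)$ and using functoriality, this pushforward is the image of $q$ under $\Dist(\states \mapsto \transitionCoalg_\stochsuppmealy(\states)(\inputs))$. This yields the key reduction
\begin{align}
    \transitionCoalg^\sharp_\suppmealy(q)(\inputs) = \nu\big(\Dist(\mathrm{ev}_\inputs \circ \transitionCoalg_\stochsuppmealy)(q)\big).
\end{align}

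The second step unpacks $\nu = \Phi_{\Observations,\States} \circ \mult_{\Observations\times\States} \circ \Dist(\Phi^{-1}_{\Observations,\States})$. Functoriality again turns $\Dist(\Phi^{-1})\circ\Dist(\mathrm{ev}_\inputs\circ \transitionCoalg_\stochsuppmealy)$ into the pushforward of $q$ along $\states \mapsto \Phi^{-1}(\transitionCoalg_\stochsuppmealy(\states)(\inputs))$. This gives an element of $\Dist(\Dist(\Observations\times\States))$. The multiplication $\mult$ of $\Dist$ is averaging: $\mult(M)(o,s') = \sum_{r} M(r)\, r(o,s')$. A short reindexing of the sum (grouping states $\states$ with the same image) shows that the result is exactly
\begin{align}
    \sum_{\states} q(\states)\,\Phi^{-1}(\transitionCoalg_\stochsuppmealy(\states)(\inputs))(o,s') = \rho_{q,\inputs}(o,s').
\end{align}
Finally, we apply $\Phi_{\Observations,\States}$ using the explicit formula \cref{eqn:disintegration} of \cref{thm:disintegration}. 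It sends $\rho_{q,\inputs}$ to its marginal on $\Observations$ together with its conditionals. These are literally $p_{q,\inputs}$ and $k_{q,\inputs}$ from \cref{eqn:explicit-unifilarisation}, with conditionals defined exactly on $\supp(p_{q,\inputs})$, so the element lands in the correct summand.

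The main obstacle is the first step: making the strength and currying of \cref{prop:algebra-on-exponental} concrete for $\Dist$ and confirming that $\nu^{(\Inputs)}$ really acts as "evaluate at $\inputs$, then apply $\nu$". Everything after that reduction is bookkeeping with the monad laws and the explicit form of $\Phi$. I would handle this step by writing the strength for $\Dist$ explicitly as $\mathrm{str}(m,\inputs)(g,\inputs') = m(g)[\inputs'=\inputs]$ and checking the composite directly on finitely supported distributions.
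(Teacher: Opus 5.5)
Your proposal is correct and takes the same route as the paper's sketch proof. Like the paper, you uncurry $\nu^{(\Inputs)}$ and use the strength and evaluation to reduce to pushing $q$ forward along $\states \mapsto \Phi^{-1}_{\Observations,\States}(\transitionCoalg_\stochsuppmealy(\states)(\inputs))$, then average with $\mult$ to obtain $\rho_{q,\inputs}$, and finally disintegrate with $\Phi_{\Observations,\States}$ into $(p_{q,\inputs}, k_{q,\inputs})$. Your explicit formula for the strength of $\Dist$ and the reindexing of the sum in $\mult$ just spell out details the paper leaves implicit.
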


\begin{proof}[Sketch proof]
    The determinisation of $(\States, \transitionCoalg_\stochsuppmealy)$ is given by $(\Dist(\States),\transitionCoalg^\sharp_\stochsuppmealy)$, where, using~\cref{eq:algebraMorphStructMealy,eq:detMachineUnif},
    \begin{align}
        \transitionCoalg_\stochsuppmealy^\sharp =  \Dist(\States)\xrightarrow{\Dist(f_\stochsuppmealy)}\Dist\left(\left(\coprod_{p\in \Dist(\Observations)}\Dist(\States)^{\supp(p)}\right)^\Inputs\right)
        \;\xrightarrow{\nu^{(\Inputs)}}\;
        \left(\coprod_{p\in \Dist(\Observations)}\Dist(\States)^{\supp(p)}\right)^\Inputs.
    \end{align}
    We wish to show that, for $q \in \Dist(\States)$ and $\inputs \in \Inputs$, we have $\transitionCoalg^\sharp_\stochmealy(q)(\inputs) = (p_{q, \inputs},k_{q, \inputs})$, with $p_{q, \inputs}$ and $k_{q, \inputs}$ as defined in \cref{eqn:explicit-unifilarisation}.
    
    To proceed, we uncurry $\nu^{(\Inputs)}$ and expand the definition of $\nu$ noting that $\transitionCoalg_\stochsuppmealy^\sharp$ is the curried version of
    \begin{multline}
        \label{eq:long-chain-2}
        \Dist(\States)\times \Inputs \xrightarrow{\Dist(f_\stochsuppmealy)\times \id_\Inputs}
        \Dist\left(\left(\coprod_{t\in \Dist(\Observations)}\Dist(\States)^{\supp(t)}\right)^\Inputs\right)\times \Inputs
        \xrightarrow{\mathrm{str}_{(\cdot)^\Inputs, \Inputs}}
        \Dist\left(\left(\coprod_{t\in \Dist(\Observations)}\Dist(\States)^{\supp(t)}\right)^\Inputs\times \Inputs\right)\\
        \xrightarrow{\Dist(\mathrm{eval}_{(\cdot), \Inputs})}
        \Dist\left(\coprod_{t\in \Dist(\Observations)}\Dist(\States)^{\supp(t)}\right)
        \xrightarrow{\Dist(\Phi^{-1}_{\Observations, \States})}
        \Dist \left( \Dist(\Observations \times \States) \right)
        \xrightarrow{\mult_{\Observations \times \States}}
        \Dist(\Observations \times \States)
        \xrightarrow{\Phi_{\Observations, \States}}
    \coprod_{t\in \Dist(\Observations)}\Dist(\States)^{\supp(t)}.
    \end{multline}

    The first five steps of \cref{eq:long-chain-2} are performing the same calculation as \cref{eq:unifilarisation-joint}, taking as input $q\in DS$ and $i\in I$ and forming the distribution $\rho_{q, \inputs}(o,s')$.
    The last step forms the marginal $p_{q, \inputs}(\observations)$ and the conditional $k_{q, \inputs}(\observations)(\states')$, as in \cref{eqn:explicit-unifilarisation}, via the isomorphism in \cref{thm:disintegration}.

\end{proof}

As with stochastic Moore machines in \cref{sec:structured-moore-machines-example}, it is helpful to compare this to the final coalgebra semantics.
The final coalgebra for unifilar machines is worked out in~\cite{virgoUnifilarMachinesAdjoint2023}. We summarise it here, starting with the following definition.

\begin{definition}[Causal stochastic behaviour]
    Let $\Dist$ be the distribution monad, and let $X, Y \in \Set$.
    For each $n \ge 1$, write $X^n$ and $Y^n$ for the $n$-fold Cartesian products,
    and let
    \begin{align}
        \pi_n^X : X^{n+1}\to X^n,
        \qquad
        \pi_n^Y : Y^{n+1}\to Y^n
    \end{align}
    be the prefix projections
    given by $\pi_n^X(x_0, \dots, x_n) = (x_0, \dots, x_{n-1})$, $\pi_n^Y(y_0, \dots, y_n) = (y_0, \dots, y_{n-1})$.
    A causal stochastic behaviour $b$ from $X$ to $Y$ is a family
    \begin{align}
        b_n : X^n \to \Dist(Y^n)
        \qquad (n \ge 1)
    \end{align}
    such that
    \begin{align}
        \Dist(\pi_n^Y) \circ b_{n+1} = b_n \circ \pi_n^X 
        \qquad
        \forall n \ge 1
    \end{align}
    We write $\Caus_\Dist(X, Y)$ for the set of all such families.
\end{definition}

These are called ``causal'' because each distribution of the first $n$ output symbols depends only on the first $n$ input symbols.
Intuitively, this implies that an output can only depend on inputs that occur in its past.

If we have some $p\in D(Y^n)\cong D(Y\times Y^{n-1})$, we can regard it as a joint distribution between $Y$ (playing the role of the first element in a list) and $Y^{n-1}$ (the tail of the list).
We can then use marginal and conditional probability (\cref{def:marginalProbability,def:conditionalProbability}), writing $p_{Y^{n-1}\mid y}$ for the conditional distribution over the tail of the list, given that its first element was $y$.

We will need an operation called \emph{evolution} \cite{mcgregorFormalisingIntentionalStance2025a}: given a causal stochastic behaviour $b\in \Caus_\Dist(X,Y)$, along with $x\in X$ and $y\in \supp (b_1(x))$, 
we write $b\bullet(x,y) \in \Caus_\Dist(X,Y)$ (read as ``$b$ evolved by $x$ and $y$'') for the causal stochastic behaviour where
\begin{equation}
\big(b\bullet (x,y)\big)_k (x_1,\dots, x_k)
=
\big(
b_{k+1}(x,x_1,\dots, x_k)
\big)_{Y^{k-1}\mid y}\,.
\end{equation}
Intuitively, to obtain $b\bullet (x,y)$, we first feed in the input $x$ and then condition on the output being $y$; this then gives us a new causal stochastic behaviour.

Given input and output sets $I,O$, the set $\Caus_\Dist(I,O)$ forms the carrier of a coalgebra
    \begin{align}
    \label{eqn:finalCausalBehaviour}
        \left( \Caus_\Dist(\Inputs, \Observations), \transitionCoalgFinal_\stochmealy^\sharp : \Caus_\Dist(\Inputs, \Observations) \to \left( \sum_{p \in \Dist(\Observations)} \left( \Caus_\Dist(\Inputs, \Observations) \right)^{\supp(p)} \right)^\Inputs \right),
    \end{align}
where $\transitionCoalgFinal_\stochmealy^\sharp$ is defined as
\begin{equation}
\transitionCoalgFinal_\stochmealy^\sharp(b)
    =
    \lambda\, i:I\,.\,
    \big(b_1(i),
    \lambda\, o:\supp(b_1(i))\,.\,
    b\bullet(i,o)
    \big).
\end{equation}
We can now state the following proposition:

\begin{proposition}[Final coalgebra of determinised stochastic Mealy machines~\cite{virgoUnifilarMachinesAdjoint2023}]
    The endofunctor $\left( \coprod_{p \in \Dist(\Observations)} ({-})^{\supp(p)} \right)^\Inputs$ has final coalgebra given by \cref{eqn:finalCausalBehaviour}.
\end{proposition}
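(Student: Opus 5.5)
The plan is to verify directly that $\big(\Caus_\Dist(\Inputs,\Observations), \transitionCoalgFinal_\stochmealy^\sharp\big)$ is a well-defined coalgebra of $G({-}) = \big(\coprod_{p\in\Dist(\Observations)} ({-})^{\supp(p)}\big)^\Inputs$. Then, for an arbitrary $G$-coalgebra $(X, g)$, I will construct a homomorphism into it and show that this homomorphism is unique.

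\textbf{Well-definedness.} I must check that $b\bullet(i,o)$ is again causal. Here the conditional should be read as a distribution over the tail $\Observations^{k}$ of $\Observations^{k+1}$; the index in the displayed formula appears to be off by one. Causality of $b\bullet(i,o)$ follows from causality of $b$ at level $k+1$, together with the fact that conditioning on the first coordinate commutes with marginalising away the last coordinate. This is a one-line computation with \cref{def:marginalProbability,def:conditionalProbability}.

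A key auxiliary fact is the following. Iterating the causality equation $\Dist(\pi^\Observations_n)\circ b_{n+1} = b_n\circ\pi^\Inputs_n$ shows that the marginal of $b_{n+1}(i,w)$ on the first output coordinate equals $b_1(i)$, independently of $w$. Consequently $o\in\supp(b_1(i))$ is exactly the condition under which the conditional is defined. Moreover, by \cref{thm:disintegration}, $b_{n+1}(i,w)$ is recovered as $\Phi^{-1}_{\Observations,\Observations^n}\big(b_1(i),\,\lambda o.\,(b\bullet(i,o))_n(w)\big)$. From this I deduce that a causal behaviour is determined by $b_1$ together with all of its evolutions.

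\textbf{Existence.} Given $(X,g)$, write $g(x)(i) = (p_{x,i}, k_{x,i})$. I define $h(x)$ recursively in $n$, simultaneously for all $x$:
\begin{align}
h(x)_1(i) = p_{x,i}, \qquad h(x)_{n+1}(i,w)(o,v) = p_{x,i}(o)\, h\big(k_{x,i}(o)\big)_n(w)(v),
\end{align}
that is, $h(x)_{n+1}(i,w) = \Phi^{-1}\big(p_{x,i}, \lambda o.\, h(k_{x,i}(o))_n(w)\big)$. Causality of $h(x)$ is proved by induction on $n$: the base case uses that $h(k(o))_1(w_1)$ is a probability distribution summing to one, and the inductive step applies the hypothesis to the states $k_{x,i}(o)$. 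By the disintegration isomorphism, $h(x)_1(i) = p_{x,i}$ and $h(x)\bullet(i,o) = h(k_{x,i}(o))$. This is precisely the statement $\transitionCoalgFinal^\sharp_\stochmealy\circ h = G(h)\circ g$.

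\textbf{Uniqueness.} Any homomorphism $h'$ must satisfy $h'(x)_1(i) = p_{x,i}$ and $h'(x)\bullet(i,o) = h'(k_{x,i}(o))$. By the reconstruction formula above, induction on $n$, quantified over all $x$, gives $h'(x)_n = h(x)_n$.

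I expect the main obstacle to be the bookkeeping in the auxiliary fact. Specifically, one must show that the first-coordinate marginal of $b_{n+1}(i,w)$ does not depend on $w$, and that causality is preserved under evolution. Both require care with the prefix projections and the indexing of tails. Everything else is a direct application of \cref{thm:disintegration}.
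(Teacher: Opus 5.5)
Your proof is correct. The paper does not give an argument of its own here: it defers to Proposition 3.2 of \cite{virgoUnifilarMachinesAdjoint2023}, which is stated for Moore-like comb machines, and only says that ``slight adaptations'' are needed for Mealy machines. Your proposal supplies those adaptations explicitly by verifying the universal property directly, in three steps:
\begin{itemize}
\item The first-output marginal of $b_{n+1}(i,w)$ equals $b_1(i)$, whatever $w$ is. Hence evolution is defined exactly on $\supp(b_1(i))$, and it preserves causality.
\item Every causal behaviour is recovered as $b_{n+1}(i,w)=\Phi^{-1}_{\Observations,\Observations^n}\big(b_1(i),\lambda o.\,(b\bullet(i,o))_n(w)\big)$.
\item Given this, the behaviour map is forced by induction on $n$, quantified over all states, which gives both existence and uniqueness.
\end{itemize}
You are also right that the conditional in the displayed evolution formula should be over the tail $\Observations^{k}$ of $\Observations^{k+1}$, not over $\Observations^{k-1}$.

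The trade-off between the two is this. The citation is short, but it leaves the reader to redo the bookkeeping needed to pass from comb-machine behaviours to Mealy behaviours $\Inputs^n\to\Dist(\Observations^n)$. Your argument is self-contained. It also isolates the only probabilistic ingredient: the disintegration isomorphism of \cref{thm:disintegration}, together with its naturality in the second variable. You use that naturality, implicitly, both in the inductive step for causality of $h(x)$ and when you say that conditioning commutes with marginalising.

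There is another possible route. Since the functor is polynomial, you could compute its final coalgebra as the limit of the terminal sequence, identifying the $n$-th stage with truncated families $(b_1,\dots,b_n)$. Your direct verification avoids computing that limit.

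One small point: in your product formula for $h(x)_{n+1}$, the value at $o\notin\supp(p_{x,i})$ must be read as $0$, because $k_{x,i}(o)$ is undefined there. Your $\Phi^{-1}$ formulation already handles this.
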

\begin{proof}
    The proof is essentially the same as for \cite[Proposition 3.2]{virgoUnifilarMachinesAdjoint2023}, although some slight adaptations have to be made due to our focus on Mealy machines rather than the Moore-like comb machines of \cite{virgoUnifilarMachinesAdjoint2023}.
\end{proof}

Unlike the case of structured Moore machines, this semantics records full causal input-output behaviour,
rather than only the last output distribution after a finite input word.
This gives some intuition for why Bayesian conditioning appears in the determinisation of supported Mealy machines but not for stochastic Moore machines:
in order to correctly predict future outputs, the determinised machine needs to take account of what the previous output was, and Bayesian conditioning is the natural way to achieve this.

\section{Conclusion}
The main result of this paper is that unifilarisation fits the general pattern of coalgebraic determinisation. The point is not that it coincides with the standard Moore-style construction, but that it arises from the same determinisation principle once the transition type is changed appropriately. 
By moving from Moore machines to Mealy machines over monads with support, we obtain a determinisation procedure on $\FuncT(X)$ that, in the case of the distribution monad, reproduces exactly the Bayesian filtering update of stochastic unifilar machines.

This also clarifies the source of the semantic gap between the two settings. 
Standard determinisation of stochastic Moore machines keeps track only of the distribution of the current output after a finite input history, in line with the idea of language acceptance from automata theory. 
Unifilarisation, by contrast, keeps track of output histories together with the posterior updates they induce, and therefore yields a semantics in terms of causal stochastic behaviours, which is more in line with agents solving problems such as POMDPs in reinforcement learning. 
The contribution of the paper is therefore twofold: it identifies unifilarisation as an instance of coalgebraic determinisation, and it explains precisely why this instance induces a richer behavioural semantics than the standard stochastic Moore case.

\bibliographystyle{eptcsalpha}
\bibliography{AllEntriesZotero, MoreRefs}

\end{document}